\documentclass[11pt]{amsart}
\usepackage{mathtools}
\usepackage{amsmath,amssymb,amsfonts,dsfont}
\usepackage{prettyref} 
\usepackage[utf8]{inputenc}
\usepackage{enumitem}
\usepackage[hyphens]{url}
\usepackage{tikz-cd}
\usepackage{tikz}
\usetikzlibrary{positioning}
\usepackage{MnSymbol}
\usepackage{bussproofs}
\usepackage[mathscr]{euscript}
\usepackage{hyperref}
\usepackage{amsthm}
\usepackage{theapa}
\usepackage{a4wide}
\usepackage[color=black,textcolor=white]{todonotes}
\usepackage{stackengine}
\usepackage{stmaryrd}
\usepackage{graphicx}
\usepackage{wrapfig}
\usepackage{float}
\graphicspath{{fig/}}
\usetikzlibrary{calc}
\usetikzlibrary{arrows,automata,topaths,matrix,positioning,fit}
\usepgflibrary{shapes.geometric}
\usetikzlibrary{shapes.geometric}
\tikzset{every state/.style={minimum size=0pt}}

\newtheorem{theorem}{Theorem}

\theoremstyle{definition} 
\newtheorem{definition}[theorem]{Definition}

\newtheorem{example}[theorem]{Example}

\newrefformat{cha}{Chapter \ref{#1}}
\newrefformat{sec}{Section \ref{#1}}
\newrefformat{tab}{Table \ref{#1}}
\newrefformat{fig}{Figure \ref{#1}}
\newrefformat{equ}{(\ref{#1})}
\newrefformat{app}{Appendix \ref{#1}}
\newrefformat{thm}{Theorem \ref{#1}}
\newrefformat{cor}{Corollary \ref{#1}}
\newrefformat{prop}{Proposition \ref{#1}}
\newrefformat{lem}{Lemma \ref{#1}}
\newrefformat{fact}{Fact \ref{#1}}
\newrefformat{obs}{Observation \ref{#1}}
\newrefformat{note}{Note \ref{#1}}
\newrefformat{idea}{Idea \ref{#1}}
\newrefformat{trivia}{Trivia \ref{#1}}
\newrefformat{def}{Definition \ref{#1}}
\newrefformat{not}{Notation \ref{#1}}
\newrefformat{con}{Convention \ref{#1}}
\newrefformat{rem}{Remark \ref{#1}}
\newrefformat{exa}{Example \ref{#1}}
\newrefformat{problem}{Problem \ref{#1}}
\newrefformat{claim}{Claim \ref{#1}}
\newrefformat{conjecture}{Conjecture \ref{#1}}
\newrefformat{exe}{Exercise \ref{#1}}
\newrefformat{alg}{Algorithm \ref{#1}}
\newrefformat{err}{Error \ref{#1}}
\newrefformat{que}{Question \ref{#1}}
\newrefformat{ite}{Item \ref{#1}}
\newrefformat{Q}{Q. \ref{#1}}
\newrefformat{warning}{Warning \ref{#1}}
\newrefformat{pseudocode}{Pseudocode \ref{#1}}

\newcommand{\I}{\mathcal I}
\newcommand{\yref}{\prettyref}
\newcommand{\ra}{\rightarrow}
\newcommand{\lfp}{\mathrm{lfp}\;}
\newcommand{\seq}{\subseteq}
\newcommand{\0}{\emptyset}
\newcommand{\tand}{\text{ and }}
\newcommand{\sm}{-}
\newcommand{\Ak}{\mathfrak A}
\newcommand{\s}{/_\sim}
\renewcommand{\dot}{\,.\,}
\newcommand{\Ir}{\I_R}
\newcommand{\vIp}{\I_{P_1}\dimes\I_{P_k}}
\newcommand{\Ipb}{\I_\Pb}
\newcommand{\PP}{\Psi_\Pb}
\newcommand{\dimes}{\times\ldots\times}
\newcommand{\ldimes}{\ltimes\ldots\ltimes}

\renewcommand{\d}{\delta}
\newcommand{\la}{\leftarrow}
\newcommand{\naf}{\,\mathrm{not}\,}
\newcommand{\abs}[1]{|\,#1\,|}
\newcommand{\Pk}{\mathfrak P}
\newcommand{\G}{\Gamma}
\newcommand{\Ip}{\mathcal{I}_P}
\newcommand{\Pp}{\Psi_P}
\newcommand{\Sk}{\mathfrak S}
\newcommand{\Rk}{\mathfrak R}
\newcommand{\A}{\mathcal A}
\newcommand{\Sb}{\mathbf{S}}
\newcommand{\Hb}{\mathbf{H}}
\newcommand{\Ib}{\mathbf{I}}
\newcommand{\Xb}{\mathbf X}
\newcommand{\Yb}{\mathbf Y}
\newcommand{\Tk}{\mathfrak T}
\newcommand{\Psn}{\Psi_{S_n}}
\newcommand{\pp}{\psi_\Pb}
\newcommand{\Pb}{\mathbf{P}}
\newcommand{\Ik}{(I_1,\ldots,I_k)}
\newcommand{\Jb}{\mathbf{J}}
\newcommand{\Ab}{\mathbf A}
\newcommand{\PA}{\Psi_\Ab}
\newcommand{\PB}{\Psi_\Bb}
\newcommand{\Bb}{\mathbf B}
\newcommand{\Tb}{\mathbf T}
\newcommand{\Cb}{\mathbf C}
\newcommand{\Prod}{P_1\ldimes P_k\,\left[\I_\Pb,\pp\right]}
\newcommand{\Iff}{\Leftrightarrow}
\renewcommand{\S}{\Sigma}
\renewcommand{\Pr}{\Psi_R}
\renewcommand{\P}{\mathcal P}

\title{On cascade products of answer set programs}
\author{
	Christian Anti\'c\\
}

\begin{document}
\begin{abstract} 
	Describing complex objects by elementary ones is a common strategy in mathematics and science in general. In their seminal 1965 paper, Kenneth Krohn and John Rhodes showed that every finite deterministic automaton can be represented (or ``emulated'') by a cascade product of very simple automata. This led to an elegant algebraic theory of automata based on finite semigroups (Krohn-Rhodes Theory). Surprisingly, by relating logic programs and automata, we can show in this paper that the Krohn-Rhodes Theory is applicable in Answer Set Programming (ASP). More precisely, we recast the concept of a cascade product to ASP, and prove that every program can be represented by a product of very simple programs, the reset and standard programs. Roughly, this implies that the reset and standard programs are the basic building blocks of ASP with respect to the cascade product. In a broader sense, this paper is a first step towards an algebraic theory of products and networks of nonmonotonic reasoning systems based on Krohn-Rhodes Theory, aiming at important open issues in ASP and AI in general.
\end{abstract}
\maketitle

\section{Introduction}

Describing complex objects by elementary ones is a common strategy in mathematics and science in general. For instance, the fundamental theorem of number theory states that every natural number can be (uniquely) represented by its prime factors. Similarly, in their seminal 1965 paper ``Algebraic theory of machines, I. Prime decomposition theorem for finite semigroups and machines'', Kenneth Krohn and John Rhodes showed that every finite deterministic automaton can be represented (or ``emulated'') by a cascade product of very simple automata. This led to an elegant algebraic theory of automata based on finite semigroups (Krohn-Rhodes Theory) and, more recently, to an algebraic theory of networks of automata \cite<cf.>{Domosi05}.

Answer Set Programming (ASP) \cite{Gelfond91}, on the other hand, has become a prominent knowledge representation and reasoning (KR\&R) formalism over the last two decades, with a wide range of applications in AI-related subfields such as, e.g., nonmonotonic reasoning, diagnosis, and planning \cite<cf.>{Brewka11}.  

In this paper, we aim at combining these two vivid areas of research and will show that, surprisingly, the Krohn-Rhodes Theory {\em is} applicable in ASP. More precisely, we recast the concept of a cascade product to ASP, and prove that every program can be represented by a product of reset programs $R=\{1\la\naf 1\}$ and $n$-standard programs $S_n$ consisting only of rules of the simple form $i\la j,\naf k$ (cf. \yref{thm:main}). Roughly, this implies that the reset and standard programs are the basic building blocks of ASP with respect to the cascade product and, strikingly, while the reset and standard programs do not possess any interesting declarative meaning (the reset program is inconsistent and the standard programs have only the empty answer set), their interaction can ``emulate'' any given program. In other words, the product semantics {\em emerges} from the interplay of its (simple) factors and allows for arbitrary complex behavior.

To the best of our knowledge, this is the first paper applying the Krohn-Rhodes Theory to logic programming. In a broader sense, it is a first step towards an algebraic theory of products and networks of nonmonotonic reasoning systems based on Krohn-Rhodes Theory, with far-reaching potential application areas including some important open issues in ASP and AI in general (cf. the discussion in \yref{sec:Discussion_and_Conclusion}).

The rest of the paper is structured as follows. In \yref{sec:preliminaries}, we present the basic definitions and results concerning ASP and automata. In \yref{sec:programmable}, we introduce the concept of a programmable automaton, and show that the distinguished reset and standard automata are programmable in this sense. In \yref{sec:homomorphic}, the main part of this paper, we recast the concept of a cascade product to ASP and prove that every program can be (homomorphically) represented by reset and standard programs. In \yref{sec:isomorphic}, we study the more restricted type of isomorphic representation and provide a complete class of programs with respect to it; moreover, we show that positive tight programs are isomorphically representable by reset programs. Finally, in \yref{sec:Discussion_and_Conclusion}, we conclude with a discussion on interesting lines for future research.

\section{Preliminaries}\label{sec:preliminaries}

We assume that the reader is familiar with the concept of a partially ordered set and that of a (complete) lattice. Following \cite{Gecseg86}, we denote by $[n]$, $n\geq 0$, the set $\{1,\ldots,n\}$. We denote, for $k\geq 1$ and $i\geq 0$, the least residue of $i$ modulo $n$ by $i\mod n$. For a set $X$, we denote by $\abs X$ the cardinality of $X$. Given a function $f:X\times Y\ra Z$, we denote by $f(\dot,y)$ the function from $X$ into $Z$ mapping each $x\in X$ to $f(x,y)\in Z$, and we denote by $\lfp f(\dot,y)$ the least fixpoint of $f(\dot,y)$. We denote the power set of $X$ by $\Pk(X)$. 

\subsection{Answer Set Programs}\label{sec:lp}

We briefly recall the syntax and answer set semantics \cite{Gelfond91} of nonmonotonic logic programs in an operator-based setting \cite<cf.>{Denecker00}.

\paragraph{Syntax} In the sequel, $\G$ will denote a finite nonempty set of propositional atoms. A ({\em normal logic}) {\em program} $P$ over some $\G_P$ is a finite nonempty set of rules of the form
\begin{align}\label{equ:rule} a\la b_1,\ldots,b_k,\naf b_{k+1},\ldots,\naf b_m,\quad m\geq k\geq 0,
\end{align} where $a,b_1,\ldots,b_m\in\G_P$ and $\naf$ denotes {\em negation-as-failure}. For convenience, we define for a rule $r$ of the form (\yref{equ:rule}), $H(r)=a$, $B^+(r)=\{b_1,\ldots,b_k\}$, $B^-(r)=\{b_{k+1},\ldots,b_m\}$, and $B(r)=B^+(r)\cup B^-(r)$. We call $r$ a {\em fact}, if $B(r)=\0$; and we call $r$ {\em positive} if $B^-(r)=\0$. We say that $P$ is {\em positive} if every rule $r\in P$ is positive, and we call $P$ {\em tight} if there is a mapping $\ell$ from $\G_P$ into the nonnegative integers such that for each rule $r$ in $P$, $\ell(H(r))>\ell(b)$ for every $b\in B^+(r)$.

\paragraph{Semantics} An {\em interpretation} of $P$ is any subset $I\seq \G_P$ and we denote the set of all interpretations of $P$ by $\Ip=\Pk(\G_P)$. Define the {\em 4-valued immediate consequence operator} $\Pp:\Ip\times\Ip\ra\Ip$ by $$\Pp(I,J)=\{H(r) : r\in P, B^+(r)\seq I, B^-(r)\cap J=\0\}.$$ Intuitively, $\Pp(I,J)$ contains the heads $H(r)$ of all rules $r$ in $P$ where the positive part of the body evaluates to {\em true} in $I$, and the negative part evaluates to {\em true} in $J$. Given some $I\in\Ip$, it is well-known that $\Pp(\dot,I)$ is monotone on the complete lattice $\Ip$ ordered by $\seq$, and hence has a {\em least fixpoint} denoted by $\lfp\Pp(\dot,I)$. We say that $I\in\Ip$ is an {\em answer set} of $P$, or a {\em $\Pp$-answer set}, if $I=\lfp\Pp(\dot,I)$.

\subsection{Krohn-Rhodes Theory}\label{sec:automata}

\begin{figure}[t]
\begin{center} 
\begin{tikzpicture}[->,>=stealth',semithick,shorten >= 1pt,node distance=2cm,auto]
\node[state] (1) {$1$};
\node[state] (2) [right of=1] {$2$};

\path[->]   
    (1) edge [loop above] node {$\sigma_0$} (1)
    (1) edge [bend left] node {$\sigma_1$} (2)
    (2) edge [loop above] node {$\sigma_1$} (2)
    (2) edge [bend left] node {$\sigma_0$} (1)
    ;
\end{tikzpicture}
\hspace{1cm}
\begin{tikzpicture}[->,>=stealth',semithick,shorten >= 1pt,node distance=1.5cm,auto]
\node[state] (1) {$1$};
\node[state] (2) [above right of=1] {$2$};
\node[state] (3) [right of=2] {$3$};
\node[state] (4) [below right of=3] {};
\node[state] (5) [below left of=4] {};
\node[state] (n) [below right of=1] {$n$};

\path[->]
  (1) edge [loop left] node {$\sigma_0$} (1)
  (2) edge [loop above] node {$\sigma_0$} (2)
  (3) edge [loop above] node {$\sigma_0,\sigma_2$} (3)
  (4) edge [loop right] node {$\sigma_0,\sigma_2$} (4)
  (5) edge [loop below] node {$\sigma_0,\sigma_2$} (5)
  (n) edge [loop below] node {$\sigma_0,\sigma_2$} (n)
  (1) edge [bend left] node {$\sigma_1,\sigma_2$} (2)
  (2) edge [bend left] node {$\sigma_2$} (1)
  (2) edge node {$\sigma_1$} (3)
  (3) edge [dashed] node {$\sigma_1$} (4)
  (4) edge [dashed] node {$\sigma_1$} (5)
  (5) edge [dashed] node {$\sigma_1$} (n)
  (n) edge node {$\sigma_1$} (1)
  ;
\end{tikzpicture}
\end{center}
\caption{The (two-state) reset automaton $\Rk$ and the $n$-standard automaton $\Sk_n$.}
\label{fig:automata}
\end{figure}

In this section, we recall some basic definitions and results of Krohn-Rhodes Theory by mainly following the lines of \cite[Chapters 1--3]{Gecseg86}.

An {\em automaton} $\Ak=(Q,\S,\delta)$ consists of a finite set $Q$ of {\em states}, a finite nonempty set $\S$, called the {\em input alphabet}, and a mapping $\d:Q\times\S\ra Q$ called the {\em transition function}.

Given two automata $\Ak=(Q,\S,\delta)$ and $\Ak'=(Q',\S',\delta')$, we say that $\Ak'$ is a {\em subautomaton} of $\Ak$ if $Q'\seq Q$, $\S'\seq\S$, and $\delta'$ is the restriction of $\delta$ to $Q'\times\S'$. A pair $h=(h_1,h_2)$ of surjective mappings $h_1:Q\ra Q'$, $h_2:\S\ra\S'$ is a {\em homomorphism} of $\Ak$ onto $\Ak'$ if $h_1(\delta(q,x))=\delta'(h_1(q),h_2(x))$, for every $q\in Q, x\in\S$. The pair $h$ is an {\em isomorphism} if $h_1$ and $h_2$ are bijective homomorphisms, and we say that $\Ak$ is {\em isomorphic} to $\Ak'$ if there exists an isomorphism $h$ of $\Ak$ onto $\Ak'$. If $\S=\S'$, then we omit $h_2$ and define $h=h_1$.

An equivalence relation $\sim$ on $Q$ is a {\em congruence relation} of $\Ak$ if $q\sim q'$ implies $\delta(q,x)\sim\delta(q',x)$, for all $q,q'\in Q$ and $x\in\S$. We denote the congruence class of $q\in Q$ with respect to $\sim$ by $q\s$, and define the {\em quotient automaton} $\Ak\s=(Q\s,\S,\delta\s)$ by $\delta\s(q\s,x)=\delta(q,x)\s$ for all $q\in Q$ and $x\in \S$. Conversely, given a homomorphism $h=(h_1,h_2)$ of $\Ak$ onto $\Ak'$, we mean by the {\em congruence relation of $\Ak$ induced by $h$} the binary relation $\sim$ on $Q$ given by $q\sim q'$ if $h_1(q)=h_1(q').$

The following automata will play a central role throughout the rest of the paper (cf. \yref{fig:automata}):
\begin{enumerate} 
\item Define the ({\em two-state}) {\em reset automaton} $\Rk=([2],\{\sigma_0,\sigma_1\},\delta_\Rk)$ by $\delta_\Rk(i,\sigma_0)=1$, and $\delta_\Rk(i,\sigma_1)=2$, for all $i\in [2]$.
\item We call an automaton $\Sk=([n],\{\sigma_0,\sigma_1,\sigma_2\},\delta_\Sk)$, $n>1$, {\em standard} if $\delta_\Sk$ satisfies the following conditions, for all $i\in [n]$:
\begin{enumerate} 
\item $\delta_\Sk(i,\sigma_0)=i$;
\item $\delta_\Sk(i,\sigma_1)= (i\mod n)+1$;
\item $\delta_\Sk(i,\sigma_2)= \begin{cases} 
                      2 & \text{if }i=1, \\
                      1 & \text{if }i=2, \\
                      i & \text{otherwise.}
                      \end{cases}$
\end{enumerate} We denote the $n$-state standard automaton by $\Sk_n$.
\end{enumerate}

The following operators on arbitrary classes $\A$ of automata will be useful:
\begin{enumerate}
\item $\Sb(\A)$ denotes the set of subautomata of automata from $\A$;
\item $\Hb(\A)$ denotes the homomorphic images of automata from $\A$;
\item $\Ib(\A)$ denotes the isomorphic images of automata from $\A$.
\end{enumerate} We will write $\Xb\Yb(\A)$ for $\Xb(\Yb(\A))$, where $\Xb$ and $\Yb$ are operators from above.

We now define the cascade product for automata, which is also known as the wreath \cite{Krohn65} or $\alpha_0$-product \cite{Gecseg86} in the literature.

\begin{definition}[Cascade Automata Product]\label{def:cascade_automata} For some $k>0$, let $\Ak_i=(Q_i,\S_i,\delta_i)$, $i\in [k]$, be a family of automata, and let $\S$ be an alphabet. A {\em feedforward function} for $\Ak_1,\ldots,\Ak_k$ is a mapping $\psi:(Q_1\dimes Q_k)\times\S\ra\S_1\dimes\S_k$ with $$\psi((q_1,\ldots,q_k),\sigma)=(\psi_1((q_1,\ldots,q_k),\sigma),\ldots,\psi_k((q_1,\ldots,q_k),\sigma))$$ where the {\em component feedforward function} $\psi_i$, $i\in [k]$, is a mapping from $(Q_1\dimes Q_k)\times\S$ into $\S_i$. In the sequel, we omit those arguments $q_j$, $j\in [k]$, $\psi_i$ does not depend on. The {\em cascade} (or {\em loop-free}) {\em automata product} of $\Ak_1,\ldots,\Ak_k$ with respect to $\S_\Ak=\S$ and some feedforward function $\psi_\Ak$ $$\Ak=(Q_\Ak,\S_\Ak,\delta_\Ak)=\Ak_1\ldimes\Ak_k\,[\S_\Ak,\psi_\Ak]$$ is given by $Q_\Ak=Q_1\dimes Q_k$ where $\psi_i$, $i\in [k]$, is independent of its $j^{\text{th}}$ component, $j\in [k]$, whenever $j\geq i$. Finally, we define the {\em transition function} $\delta_\Ak:Q_\Ak\times\S_\Ak\ra Q_\Ak$ by $$\delta_\Ak((q_1,\ldots,q_k),\sigma)=(\delta_1(q_1,\psi_1(\sigma)),\ldots,\delta_k(q_k,\psi_k((q_1,\ldots,q_{k-1}),\sigma))).$$
\end{definition}

\begin{definition}\label{def:representation-automata} We say that an automaton $\Ak$ {\em homomorphically} (resp., {\em isomorphically}) {\em represents} an automaton $\Ak'$ if $\Ak'\in\Hb\Sb(\{\Ak\})$ (resp., $\Ak'\in\Ib\Sb(\{\Ak\})$). Moreover, we say that a class $\A$ of automata is {\em homomorphically} (resp., {\em isomorphically}) {\em complete} with respect to the cascade automata product if every automaton $\Ak$ can be homomorphically (resp., isomorphically) represented by a cascade automata product of automata from $\A$.
\end{definition}

The following result is a consequence of the Krohn-Rhodes decomposition theorem \cite{Krohn65}, and it will be of great importance for our main \yref{thm:main}.

\begin{theorem}\label{thm:Krohn-Rhodes}[cf. \cite{Gecseg86}, Theorem 2.1.5] Let $\Ak$ be an automaton with $n>1$ states. Then, $\Ak$ can be homomorphically represented by a cascade automata product of reset and $n$-state standard automata over the same input alphabet as $\Ak$.
\end{theorem}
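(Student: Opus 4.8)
The plan is to derive the statement from the Krohn--Rhodes prime decomposition theorem, together with one observation that pins down the role of the $n$-standard automaton $\Sk_n$. First I would identify $Q$ with $[n]$ and read off, for each input $\sigma\in\S$, the induced state transformation $t_\sigma=\d(\dot,\sigma):[n]\ra[n]$; the family $\{t_\sigma : \sigma\in\S\}$ generates the transition monoid $M(\Ak)$ of $\Ak$ inside the full transformation monoid on $[n]$. The Krohn--Rhodes theorem then produces a cascade (loop-free) product, in the sense of \yref{def:cascade_automata}, whose factors are (i) reset automata and (ii) permutation automata built from the simple groups dividing $M(\Ak)$, such that $\Ak$ is a homomorphic image of a subautomaton of this product, i.e.\ $\Ak\in\Hb\Sb$ of the product in the sense of \yref{def:representation-automata}. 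The product is taken over $\S$ and the alphabet homomorphism $h_2$ is the identity on $\S$, matching the ``same input alphabet'' clause in the statement.

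The second step replaces every group factor by $\Sk_n$. The key point is that $\Sk_n$ is itself a permutation automaton whose transition monoid is the full symmetric group $\mathrm{Sym}([n])$: its letters act as $\sigma_0\mapsto\mathrm{id}$, $\sigma_1\mapsto(1\,2\,\cdots\,n)$ and $\sigma_2\mapsto(1\,2)$ (cf.\ \yref{fig:automata}), and the $n$-cycle together with the transposition $(1\,2)$ generate $\mathrm{Sym}([n])$. Since $\abs{Q}=n$, every group factor arising in the decomposition is a permutation group acting on a subset of the $n$ states and therefore embeds into $\mathrm{Sym}([n])$; consequently each such factor is a homomorphic image of a subautomaton of a cascade of copies of $\Sk_n$. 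Substituting these representations back into the Krohn--Rhodes product, and using that a cascade of members of $\Hb\Sb(\{\Rk,\Sk_n\})$ again lies in $\Hb\Sb(\{\Rk,\Sk_n\})$, collapses the whole construction to a cascade product of reset and $n$-standard automata over $\S$.

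I expect the main obstacle to be the first step, namely realizing an arbitrary transformation $t_\sigma$ by a \emph{single} transition of the product. Because the alphabet is kept fixed ($h_2=\mathrm{id}$), one input letter triggers exactly one move of each factor, so a permutation that is a long product of the generators $\sigma_1,\sigma_2$ cannot be produced inside one factor in a single step. This is precisely what the Krohn--Rhodes coordinatization handles: the automaton is unfolded along its holonomy/flow structure, each level contributes one group factor (now $\Sk_n$) or one reset factor, and the surjection $h_1$ decodes the parallel one-step update of the coordinate tuple back to $t_\sigma$. Verifying the homomorphism condition $h_1(\d_\Ak((q_1,\ldots,q_k),\sigma))=\d(h_1(q_1,\ldots,q_k),\sigma)$ for this decoding, and checking that the feedforward function $\psi$ respects the loop-free dependency (each $\psi_i$ independent of the components $j\geq i$), is the technical heart of the argument and is exactly the content of Theorem~2.1.5 of \citeN{Gecseg1986}, which I would invoke here rather than reprove.
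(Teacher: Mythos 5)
Your proposal is correct and takes essentially the same route as the paper: the paper offers no proof of this statement at all, importing it verbatim as Theorem 2.1.5 of \citeN{Gecseg1986}, and your argument---after sketching the Krohn--Rhodes coordinatization and the role of $\Sk_n$ as generating $\mathrm{Sym}([n])$---likewise bottoms out in invoking that very theorem. The surrounding sketch is a reasonable outline of the literature proof, so nothing further is needed.
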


We now turn to isomorphic completeness. Let $\Tk_n=([n],\S_n,\delta_n)$, $n\geq 1$, such that $\S_n$ is the set of all mappings $\sigma:[n]\ra [n]$, and $\delta_n(j,\sigma)=\sigma(j)$, for all $j\in [n]$.

\begin{theorem}\label{thm:Tkn}[cf. \cite{Gecseg86}, Theorem 3.2.1] A class $\A$ of automata is isomorphically complete with respect to the cascade automata product iff for every $n\geq 1$, there exists some $\Ak\in\A$ such that $\Tk_n$ can be embedded into a cascade automata product $\Ak\,[\S,\psi]$, consisting of a single factor.
\end{theorem}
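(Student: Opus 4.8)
The plan is to prove both directions of the equivalence by exploiting two structural properties of the full transformation automaton $\Tk_n=\<[n],\S_n,\d_n\>$. The first is \emph{universality}: every $n$-state automaton is (isomorphic to) a single-factor cascade product of $\Tk_n$. Concretely, for an $n$-state automaton $\Bb=\<Q,\S_\Bb,\d_\Bb\>$ I fix a bijection $\beta:Q\Mapsto[n]$ and let $\psi:\S_\Bb\Mapsto\S_n$ send $x$ to the transformation $\beta\circ\d_\Bb(\dot,x)\circ\beta^{-1}$; then $\beta$ is an isomorphism of $\Bb$ onto $\Tk_n[\S_\Bb,\psi]$, and the point is that $\psi$ need \emph{not} be injective, so this covers even automata with distinct inputs acting identically. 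The second property is \emph{congruence-simplicity}: if a congruence $\sim$ of $\Tk_n$ identifies $a\neq b$, then for arbitrary $c,d\in[n]$ the transformation $\sigma$ with $\sigma(a)=c$, $\sigma(b)=d$ gives $c=\d_n(a,\sigma)\sim\d_n(b,\sigma)=d$, so $\sim$ is total; hence for $n\geq2$ the only congruences of $\Tk_n$ are the identity and the total relation (the case $n=1$ being immediate).

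For the direction ($\Leftarrow$), I assume that for each $n$ some $\Ak\in\A$ admits an embedding $(e_1,e_2)$ of $\Tk_n$ into a single-factor product $\Ak[\S',\psi']$. Given an arbitrary $n$-state $\Bb$, I write $\Bb\cong\Tk_n[\S_\Bb,\psi]$ by universality and then verify that $(e_1,\mathrm{id}_{\S_\Bb})$ embeds $\Tk_n[\S_\Bb,\psi]$ into the single-factor product $\Ak[\S_\Bb,\psi'\circ e_2\circ\psi]$: the transition identity to check reduces to the embedding identity for $(e_1,e_2)$ evaluated at the input $\psi(x)$. Hence $\Bb\in\Ib\Sb(\{\Ak[\S_\Bb,\psi'\circ e_2\circ\psi]\})$, and since a single-factor product is a cascade product of one factor from $\A$, the class $\A$ is isomorphically complete.

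For the direction ($\Rightarrow$), isomorphic completeness yields an embedding $h=(h_1,h_2)$ of $\Tk_n$ into some cascade product $\Ak_1\ldimes\Ak_m[\S,\psi]$ with all $\Ak_i\in\A$, and I collapse it to a single factor by induction on $m$. Projection onto the first coordinate is a homomorphism $\pi_1$ of the product onto $\Ak_1[\S,\psi_1]$ (since $\psi_1$ depends on the input only, by \yref{def:cascade_automata}), so $\pi_1\circ h$ maps $\Tk_n$ onto a subautomaton of $\Ak_1[\S,\psi_1]$ and its induced congruence is, by simplicity, the identity or the total relation. If it is the identity, then $\pi_1\circ h$ is injective on states while $h_2$ is injective on inputs, so it is an embedding of $\Tk_n$ into the single-factor product $\Ak_1[\S,\psi_1]$, and we finish with $\Ak=\Ak_1$. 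If it is total, all states $h_1(j)$ share a first coordinate $q_1^*$, which the transition identity forces to be fixed by every $\psi_1(h_2(\tau))$; I then freeze this coordinate (substituting $q_1^*$ into $\psi_2,\ldots,\psi_m$) and drop it, obtaining an embedding of $\Tk_n$ into a cascade product of the $m-1$ factors $\Ak_2,\ldots,\Ak_m\in\A$, to which the induction hypothesis applies.

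I expect the main obstacle to lie in this ($\Rightarrow$) collapse, specifically in making the freeze-and-project step watertight: one must check that the frozen feedforward functions still respect the loop-free dependency constraints of \yref{def:cascade_automata}, that injectivity of $h_1$ descends to the truncated state map, and that the total case cannot occur when $m=1$ (there $h_1$ injective forces the identity congruence for $n\geq2$), so that the induction genuinely terminates at a single factor. The ($\Leftarrow$) direction is comparatively routine once the universality isomorphism $\Bb\cong\Tk_n[\S_\Bb,\psi]$ is in hand, the only delicate point being that it is isomorphic, rather than subautomaton, representation that licenses the non-injective input relabelling.
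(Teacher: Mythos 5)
Your proof is correct, but note that the paper itself contains no proof of this statement: \yref{thm:Tkn} is imported verbatim from \citeN{Gecseg1986} (Theorem 3.2.1), so what you have written is a reconstruction of the textbook argument rather than an alternative to anything in the paper. Your reconstruction is sound and hits the two essential lemmas. Universality is right: for any $n$-state $\Bb$ the map $\psi(x)=\beta\circ\d_\Bb(\dot,x)\circ\beta^{-1}$ makes $\beta$ an isomorphism of $\Bb$ onto $\Tk_n[\S_\Bb,\psi]$, and your ($\Leftarrow$) composition checks out, since the required identity $e_1(\psi(x)(j))=\d_\Ak(e_1(j),\psi'(e_2(\psi(x))))$ is exactly the embedding identity for $(e_1,e_2)$ at $\sigma=\psi(x)$. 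Congruence-simplicity of $\Tk_n$ is also right, and it is precisely where the full transformation alphabet $\S_n$ is needed: a congruence identifying $a\neq b$ identifies $\d_n(a,\sigma)=\sigma(a)$ with $\d_n(b,\sigma)=\sigma(b)$ for \emph{every} choice of $\sigma(a),\sigma(b)$, hence is total. Your ($\Rightarrow$) induction is the delicate part, and you handle its three genuine obligations correctly: $\psi_1$ depends only on the external input (by \yref{def:cascade_automata}), so first-coordinate projection is a homomorphism onto $\Ak_1[\S,\psi_1]$ and the induced relation on $[n]$ is a congruence of $\Tk_n$; in the total case, all $h_1(j)$ share a first coordinate $q_1^*$, so substituting $q_1^*$ into $\psi_2,\ldots,\psi_m$ yields feedforward functions that still satisfy the loop-free dependency constraint, the truncated state map stays injective (the dropped coordinate is constant on the image), and the homomorphism identity survives the substitution; and termination holds because at $m=1$ injectivity of $h_1$ forces the identity congruence. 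One cosmetic remark: the fixedness of $q_1^*$ under every $\psi_1(h_2(\tau))$, which you derive from the transition identity, is a consequence rather than a needed hypothesis of the freeze-and-drop step; constancy of the first coordinate on the image of $h_1$ is all that is used.
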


\section{Programmable Automata}\label{sec:programmable}

\begin{figure}[t]
\begin{center} 
\begin{tikzpicture}[->,>=stealth,semithick,shorten >= 1pt,node distance=3cm,auto]
\node[state] (q_0) {$\0$};
\node[state] (q_1) [above left of=q_0] {$\{a\}$};
\node[state] (q_2) [above right of=q_0] {$\{b\}$};
\node[state] (q_3) [above right of=q_1] {$\{a,b\}$};

\path[->]   
      (q_0) edge [loop below] node {$\{a,b\}$} (q_0)
      (q_0) edge node [below=.1] {$\{a\}$} (q_1)
      (q_0) edge node [below=.1] {$\{b\}$} (q_2)
      (q_0) edge [bend left] node {$\0$} (q_3)
      (q_1) edge [loop left] node {$\{a\}$} (q_1)
      (q_1) edge node [above=.1] {$\0$} (q_3)
      (q_1) edge [bend left] node {$\{b\}$} (q_2)
      (q_1) edge [bend right] node [below=.3] {$\{a,b\}$} (q_0)
      (q_2) edge [loop right] node {$\{b\}$} (q_2)
      (q_2) edge [bend left] node [below=.3] {$\{a,b\}$} (q_0)
      (q_2) edge node [above=.1] {$\0$} (q_3)
      (q_2) edge [bend left] node {$\{a\}$} (q_1)
      (q_3) edge [loop above] node {$\0$} (q_3)
      (q_3) edge [bend left] node {$\{a,b\}$} (q_0)
      (q_3) edge [bend left] node {$\{b\}$} (q_2)
      (q_3) edge [bend right] node [above=.1] {$\{a\}$} (q_1)
      ;
\end{tikzpicture}
\end{center}
\caption{The characteristic automaton $\Psi_B$ of the program $B=\{a\la\naf b; b\la\naf a\}$.}
\label{fig:B}
\end{figure}

In this section, we relate programs and automata and prove in \yref{thm:programmable} that the distinguished automata given in \yref{sec:automata} can be ``realized'' by programs. This connection will serve as the basis for the rest of the paper, and for the main \yref{thm:main} in particular.

Given some program $P$, we define its {\em characteristic automaton} $\Ak_P=(Q_P,\S_P,\delta_P)$ by $Q_P=\S_P=\Ip$ and $\delta_P=\Pp$. In the sequel, we will not distinguish between the operator $\Pp$ and the characteristic automaton $\Ak_P=(\Ip,\Ip,\Pp)$, i.e., we will refer to $\Ak_P$ simply by $\Pp$ and will call $\Pp$ the characteristic automaton of $P$ (cf. \yref{fig:B}).

\begin{definition}\label{def:programmable} We say that an automaton $\Ak$ is {\em homomorphically} (resp., {\em isomorphically}) {\em programmable} if there exists some program $P$ such that $\Pp$ homomorphically (resp., isomorphically) represents $\Ak$, that is, $\Ak\in\Hb\Sb(\{\Pp\})$ (resp., $\Ak\in\Ib\Sb(\{\Pp\})$). We then say that $P$ {\em homomorphically} (resp., {\em isomorphically}) {\em programs} $\Ak$.
\end{definition}

We illustrate this concept with an example; in \yref{thm:programmable} we will see that the reset automaton $\Rk$ and the $n$-state standard automaton $\Sk_n$, $n>1$, are isomorphically programmable.

\begin{example}\label{exa:elevator} Define the {\em elevator automaton} $\mathfrak E=([2],\{\sigma_0,\sigma_1\},\delta_\mathfrak E)$ by $\delta_\mathfrak E(1,\sigma_0)=1$, $\delta_\mathfrak E(1,\sigma_1)=2$, and $\delta_\mathfrak E(2,\sigma_0)=\delta_\mathfrak E(2,\sigma_1)=2$ \cite<cf.>[p.45]{Domosi05}. On the other hand, define the {\em elevator program} $E$ by $E=\{e\la e; e\la\naf e\}$. Then, $h=(h_1,h_2)$ defined by $h_1(\0)=1$, $h_1(\{e\})=2$, $h_2(\{e\})=\sigma_0$, and $h_2(\0)=\sigma_1$ is an isomorphism of (the automaton) $\Psi_E$ onto $\mathfrak E$; hence, $E$ isomorphically programs $\mathfrak E$.
\end{example}

For convenience, in the sequel we occasionally denote atoms by nonnegative integers.

\begin{definition}\label{def:programs} The {\em reset program} $R$ over $\G_R=[1]$ consists of the following single rule: $$1\la\naf 1.$$ The {\em $n$-standard program} (or {\em $n$-program}) $S_n$ over $\G_n=[n]\cup\{3\}$, $n>1$, consists of the following rules, for all $i\in [n]$ and $j\in [n]$, $j>2$:
\begin{align*} 
i &\la i,\;\naf 1,         & 1 &\la 2,\;\naf 3,  \\
(i\mod n)+1 &\la i,\;\naf 2, & 2 &\la 1,\;\naf 3,  \\
&                         & j &\la j,\;\naf 3.
\end{align*}
\end{definition}

Note that the reset program $R$ is {\em inconsistent}, i.e., has no $\Pr$-answer sets, and for every $n>1$, the $n$-program $S_n$ has the $\Psn$-answer set $\0$.

\begin{theorem}\label{thm:programmable} The reset program $R$ and the $n$-standard program $S_n$ isomorphically program the reset automaton $\Rk$ and the $n$-state standard automaton $\Sk_n$, $n>1$, respectively.
\end{theorem}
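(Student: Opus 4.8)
The plan is to exhibit in each case an explicit subautomaton of the characteristic automaton together with a pair of bijections $h=(h_1,h_2)$ realising the required isomorphism, and then to verify the homomorphism condition $h_1(\d(I,J))=\d'(h_1(I),h_2(J))$ by directly computing the immediate consequence operator on the relevant states and inputs.

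For the reset program I would first compute $\Pr$ in full. Since $\I_R=\{\0,\{1\}\}$ and the only rule is $1\la\naf 1$, one gets $\Pr(I,J)=\{1\}$ whenever $1\notin J$ and $\Pr(I,J)=\0$ whenever $1\in J$, independently of $I$; thus input $\0$ is a constant reset to $\{1\}$ and input $\{1\}$ is a constant reset to $\0$. Taking the full two-state automaton as the subautomaton and setting $h_1(\{1\})=1$, $h_1(\0)=2$, $h_2(\0)=\sigma_0$, $h_2(\{1\})=\sigma_1$, the four transitions immediately confirm that $h$ is an isomorphism of $\Pr$ onto $\Rk$, so $\Rk\in\Ib\Sb(\{\Pr\})$.

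For the $n$-standard program the core idea is to restrict $\Psn$ to the $n$ \emph{singleton} states $\{1\},\ldots,\{n\}$ and to exactly three inputs, the three two-element subsets of $\{1,2,3\}$:
$$J_0=\{2,3\},\qquad J_1=\{1,3\},\qquad J_2=\{1,2\}.$$
The guiding observation is that each input omits exactly one of $1,2,3$ and thereby enables, through the matching negation-as-failure guard, precisely one family of rules while the two non-omitted atoms block the other two families: $J_0$ omits $1$ and activates only the identity rules $i\la i,\naf 1$; $J_1$ omits $2$ and activates only the successor rules $(i\mod n)+1\la i,\naf 2$; and $J_2$ omits $3$ and activates only the rules $1\la 2,\naf 3$, $2\la 1,\naf 3$, $j\la j,\naf 3$ implementing the transposition of $1$ and $2$ that fixes every $j>2$. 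Because every rule of $S_n$ has a single positive body atom, only rules whose body atom equals $i$ can fire from the singleton $\{i\}$, so each output is again a singleton, the subautomaton is closed, and the three inputs realise exactly the maps $\sigma_0,\sigma_1,\sigma_2$ of $\Sk_n$. Setting $h_1(\{i\})=i$ and $h_2(J_k)=\sigma_k$ then yields the isomorphism, giving $\Sk_n\in\Ib\Sb(\{\Psn\})$.

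The step I expect to be laborious is the boundary case analysis. One must check the states $\{1\}$ and $\{2\}$ separately from the generic $\{i\}$ with $i>2$, since there the successor rule meets the swap rules (note $(1\mod n)+1=2$) and only the $\naf 3$-rules $1\la 2$ and $2\la 1$ carry out the transposition. One must also treat $n=2$ on its own, where $\G_n=[n]\cup\{3\}=\{1,2,3\}$ supplies the extra atom $3$, no rule $j\la j,\naf 3$ occurs, and the successor rule collapses via $(2\mod 2)+1=1$. Since the negative bodies only ever mention $1,2,3$, the same three inputs serve for every $n>1$, and a routine verification shows that in each sub-case the predicted singleton $\d_{\Sk_n}(i,\sigma_k)$ is produced; this bookkeeping is the only genuinely tedious part of the proof.
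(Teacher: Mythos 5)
Your proof is correct and takes essentially the same route as the paper's: for $S_n$ you use exactly the same subautomaton of $\Psn$ (the singleton states $\{i\}$ with the three inputs $\{2,3\},\{1,3\},\{1,2\}$) and the same maps $h_1(\{i\})=i$, $h_2(J_k)=\sigma_k$, and your verification of how each input enables exactly one rule family is just a more explicit version of the paper's ``straightforward computation.'' The only difference is your labeling in the reset case ($h_1(\{1\})=1$, $h_1(\0)=2$, $h_2(\0)=\sigma_0$, $h_2(\{1\})=\sigma_1$, versus the paper's opposite assignment), which is the paper's isomorphism composed with the automorphism of $\Rk$ swapping the two states and the two inputs, and is equally valid.
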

\begin{proof} Define $h_{R,1}:\Ir\ra[2]$ and $h_{R,2}:\Ir\ra\{\sigma_0,\sigma_1\}$ by $h_{R,1}(\0)=1$, $h_{R,1}(\{1\})=2$, $h_{R,2}(\0)=\sigma_1$, and $h_{R,2}(\{1\})=\sigma_0$. A straightforward computation shows that $h_R=(h_{R,1},h_{R,2})$ is an isomorphism of $\Pr$ onto $\Rk$; i.e., we have $$h_{R,1}(\Pr(I,J))=\delta_\Rk(h_{R,1}(I),h_{R,2}(J)),\quad\text{for all }I,J\in\Ir.$$ Hence, $\Rk\in\Ib\Sb(\{\Pr\})$.

For the second part, let $\Psn'=(\I_{S_n}',\I_{S_n}'',\Psn')$ be the subautomaton of $\Psn$ given by $\I_{S_n}'=\{\{i\}:i\in [n]\}\seq\I_{S_n}$, $\I_{S_n}''=\{\{2,3\},\{1,3\},\{1,2\}\}\seq\I_{S_n}$, and $\Psn'$ equals $\Psn$ restricted to $\I_{S_n}'\times\I_{S_n}''$. Define $h_{S_n,1}:\I_{S_n}'\ra [n]$ by $h_{S_n,1}(\{i\})=i$, for all $i\in [n]$; and $h_{S_n,2}:\I_{S_n}''\ra\{\sigma_0,\sigma_1,\sigma_2\}$ by $h_{S_n,2}(\{2,3\})=\sigma_0$, $h_{S_n,2}(\{1,3\})=\sigma_1$, and $h_{S_n,2}(\{1,2\})=\sigma_2$. Then, $h=(h_{S_n,1},h_{S_n,2})$ is an isomorphism of $\Psn'$ onto $\Sk_n$; i.e., we have $$h_{S_n,1}(\Psn'(\{i\},J))=\delta_{\Sk_n}(h_{S_n,1}(\{i\}),h_{S_n,2}(J)),\quad\text{for all }i\in [n]\tand J\in\I_{S_n}''.$$ Hence, $\Sk_n\in\Ib\Sb(\{\Psn\})$.
\end{proof}

\section{Cascade Products and Homomorphic Representations}\label{sec:homomorphic}


In this section, we recast the concept of a cascade automata product presented in \yref{sec:automata} (cf. \yref{def:cascade_automata}) to the setting of ASP and study homomorphic representations.

\begin{definition}[Cascade Program Product]\label{def:cascade} Let $P_1,\ldots,P_k$, $k>1$, be a family of programs over some alphabets $\G_{P_1},\ldots,\G_{P_k}$, respectively, and let $\I_\Pb$ be some finite nonempty set. A {\em feedforward function} for $P_1,\ldots,P_k$ is a mapping $\pp:(\vIp)\times\Ipb\ra\vIp$ with $$\pp(\Ik,\Jb)=(\psi_{\Pb,1}(\Ik,\Jb),\ldots,\psi_{\Pb,k}(\Ik,\Jb))$$ where the {\em component feedforward function} $\psi_{\Pb,i}$, $i\in [k]$, is a mapping from $(\vIp)\times\Ipb$ into $\I_{P_i}$. In the sequel, we omit those arguments $I_j$, $j\in [k]$, $\psi_{\Pb,i}$ does not depend on. The ({\em cascade} or {\em loop-free program}) {\em product} of $P_1,\ldots,P_k$ with respect to $\I_\Pb$ and some feedforward function $\pp$
\begin{align*} \Pb=\Prod
\end{align*} is given by its component feedforward functions $\psi_{\Pb,i}$, $i\in [k]$, which are independent of their $j^{\text{th}}$ component, $j\in [k]$, whenever $j\geq i$. Finally, we define the {\em characteristic automaton} $\PP=(Q_\Pb,\S_\Pb,\PP)$ of $\Pb$ by $Q_\Pb=\vIp$, $\S_\Pb=\Ipb$, and $\PP:(\vIp)\times\Ipb\ra\vIp$ with $$\PP(\Ik,\Jb)=(\Psi_{P_1}(I_1,\psi_{\Pb,1}(\Jb)),\ldots,\Psi_{P_k}(I_k,\psi_{\Pb,k}((I_1,\ldots,I_{k-1}),\Jb))).$$
\end{definition}
 
Intuitively, a cascade program product is a collection of programs which are connected to each other and exchange (local) information via a feedforward function, where each component program may depend only on the preceding components and on the global input; every state-transition of the characteristic automaton of the product is then the result of the {\em simultaneous local} state-transitions of the characteristic automata of its component programs.

Formally, a product is not a program according to the definition given in \yref{sec:lp}. However, we can relate products and programs as follows (cf. \yref{def:representation-automata}).

\begin{definition}\label{def:representation} We say that a cascade program product $\Pb$ {\em homomorphically} (resp., {\em isomorphically}) {\em represents} a program $P$ if $\PP$ homomorphically (resp., isomorphically) represents $\Pp$, that is, $\Pp\in\Hb\Sb(\{\PP\})$ (resp., $\Pp\in\Ib\Sb(\{\PP\})$). Moreover, we say that a class $\P$ of programs is {\em homomorphically} (resp., {\em isomorphically}) {\em complete} with respect to the cascade program product if every program $P$ can be homomorphically (resp., isomorphically) represented by a cascade program product of programs from $\P$.
\end{definition}

We now make the relation between products and programs more explicit. In the context of logic programming, representation (or ``emulation'') means semantic equivalence (modulo some encoding). According to \yref{def:representation}, a product $\Pb=P_1\ldimes P_k\,[\Ipb,\pp]$, $k>1$, represents a program $P$ if the characteristic automaton $\PP$ represents the characteristic automaton $\Pp$ (in the sense of \yref{sec:automata}); that is, if there exists a subautomaton $\PP'=(\I'_{P_1}\dimes\I'_{P_k},\Ipb',\PP')$ of $\PP$ and a congruence relation $\sim$ on $\I'_{P_1}\dimes\I'_{P_k}$ such that $\PP'\s$ is isomorphic to $\Pp$. Intuitively, every interpretation $I\in\Ip$ of $P$ then corresponds to a congruence class of $k$-tuples from $\I'_{P_1}\dimes\I'_{P_k}$; if the representation is isomorphic, then $I$ can be identified with a single $k$-tuple $(I'_1,\ldots,I'_k)$ and in this case we can imagine $(I'_1,\ldots,I'_k)$ to be an ``encoding'' of $I$.

Interestingly enough, by the forthcoming \yref{thm:main}, we can assume that only reset and standard programs occur as factors in the product $\Pb$. That is, \yref{thm:main} roughly implies that by knowing the reset program $R$ and all the $n$-programs $S_n$, $n>1$, and by knowing how to form the cascade program product, we essentially know all programs; viz., the reset and standard programs are the basic building blocks of ASP with respect to the cascade program product.

We are now ready to state the main theorem of this paper.

\begin{theorem}\label{thm:main} Every program $P$ over some alphabet $\G_P$, with $\abs{\G_P}=m$, can be homomorphically represented by a cascade program product $\Pb$ of reset and $2^m$-standard programs.
\end{theorem}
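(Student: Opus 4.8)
The plan is to lift the Krohn-Rhodes decomposition of the characteristic automaton $\Pp$ to the level of programs via the isomorphic programmability established in \yref{thm:programmable}. Since $\G_P$ is nonempty we have $m\geq 1$, so $\Pp=\<\Ip,\Ip,\Pp\>$ has $\abs{\Ip}=2^m\geq 2$ states. Applying \yref{thm:Krohn-Rhodes} to $\Pp$ yields a cascade automata product $\Ak=\Ak_1\ldimes\Ak_k\,[\Ip,\psi_\Ak]$ over the input alphabet $\Ip$, each factor $\Ak_i=\<Q_i,\S_i,\d_i\>$ being either the reset automaton $\Rk$ or the $2^m$-state standard automaton $\Sk_{2^m}$, such that $\Pp\in\Hb\Sb(\{\Ak\})$. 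Write $\psi_{\Ak,i}$ for the $i$-th component of $\psi_\Ak$, which depends only on the first $i-1$ state components and on the global input.

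Next I would replace each automata factor by the corresponding program ($\Rk$ by $R$, $\Sk_{2^m}$ by $S_{2^m}$) and assemble a cascade program product $\Pb=P_1\ldimes P_k\,[\Ipb,\pp]$ with global input $\Ipb=\Ip$. By \yref{thm:programmable}, for each $i\in[k]$ there is an isomorphism $h_i=(h_{i,1},h_{i,2})$ of a subautomaton of $\Psi_{P_i}$ with state set $\I'_{P_i}$ and input set $\I''_{P_i}$ onto $\Ak_i$; here $\I'_R=\I''_R=\Ir$ (the full $\Pr$), while for $S_{2^m}$ these are the distinguished sets $\{\{j\}:j\in[2^m]\}$ and $\{\{2,3\},\{1,3\},\{1,2\}\}$. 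I would then transport $\psi_\Ak$ through these isomorphisms, setting $$\psi_{\Pb,i}((I_1,\ldots,I_{i-1}),\Jb)=h_{i,2}^{-1}\big(\psi_{\Ak,i}((h_{1,1}(I_1),\ldots,h_{i-1,1}(I_{i-1})),\Jb)\big).$$ As $\psi_{\Ak,i}$ reads only the first $i-1$ states, so does $\psi_{\Pb,i}$; hence the loop-free independence condition of \yref{def:cascade} is inherited from \yref{def:cascade_automata}.

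The technical heart, and the step I expect to be the main obstacle, is to verify that $H_1=h_{1,1}\x\ldots\x h_{k,1}$ is an isomorphism from a suitable subautomaton $\PP'$ of $\PP$ onto $\Ak$. I would take $\PP'$ to have state set $\I'_{P_1}\dimes\I'_{P_k}$ and input set $\Ip$. First one must check closure: since $h_{i,2}^{-1}$ lands in the input domain $\I''_{P_i}$ on which the subautomaton isomorphism is defined, each component transition $\Psi_{P_i}(I_i,\psi_{\Pb,i}(\cdots))$ stays inside $\I'_{P_i}$ (for the standard factors this is exactly the closure of $\Psn'$ guaranteed by \yref{thm:programmable}, and for the reset factors it is automatic as $\I'_R=\Ir$). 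Second, applying $h_i$ component-wise and using $h_{i,2}\circ h_{i,2}^{-1}=\mathrm{id}$, the $i$-th entry of $H_1(\PP'((I_1,\ldots,I_k),\Jb))$ equals $\d_i(h_{i,1}(I_i),\psi_{\Ak,i}((h_{1,1}(I_1),\ldots),\Jb))$, which is precisely the $i$-th entry of $\d_\Ak(H_1((I_1,\ldots,I_k)),\Jb)$. Thus $H_1$ intertwines the two transition functions; since each $h_{i,1}$ is a bijection onto $Q_i$ and both automata share the input alphabet $\Ip$, we obtain $\PP'\cong\Ak$, i.e. $\Ak\in\Ib\Sb(\{\PP\})$.

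Finally I would combine the two containments. From $\PP'\cong\Ak$ and $\PP'\in\Sb(\{\PP\})$ we get $\Hb\Sb(\{\Ak\})=\Hb\Sb(\{\PP'\})\seq\Hb\Sb(\{\PP\})$, whence $\Pp\in\Hb\Sb(\{\Ak\})\seq\Hb\Sb(\{\PP\})$. By \yref{def:representation} this means exactly that the cascade program product $\Pb$ of reset and $2^m$-standard programs homomorphically represents $P$, completing the proof.
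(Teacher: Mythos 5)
Your proposal is correct and follows essentially the same route as the paper's proof: apply \yref{thm:Krohn-Rhodes} to $\Pp$, replace each automata factor by the corresponding program via \yref{thm:programmable}, transport the feedforward function through those isomorphisms, and conclude by transitivity of representation. The only difference is that you spell out explicitly what the paper compresses into ``$\psi_{\Pb,i}$ coincides with $\psi_{P,i}$ modulo the isomorphisms'' --- namely the definition $\psi_{\Pb,i}=h_{i,2}^{-1}\circ\psi_{\Ak,i}\circ(h_{1,1}\x\ldots\x h_{i-1,1}\x\mathrm{id})$, the closure of the subautomaton, and the intertwining computation --- which is a welcome elaboration rather than a deviation.
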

\begin{proof} According to \yref{def:representation}, we have to show that there exists some product $\Pb$ such that $\PP$ homomorphically represents $\Pp$. Since $\Pp$ has $2^m$ states, \yref{thm:Krohn-Rhodes} yields a cascade automata product $\Ak_P=\Ak_1\ldimes\Ak_k\,[\Ip,\psi_P]$, for some $k>0$, consisting of reset and $2^m$-standard automata homomorphically representing $\Pp$. Note that $\Ak_P$ has the same input alphabet $\Ip$ as $\Pp$. Define the product $\Pb=P_1\ldimes P_k\,[\Ipb,\pp]$ as follows: (i) for every $i\in [k]$, if $\Ak_i$ is the reset automaton $\Rk$ (resp., $2^m$-standard automaton $\Sk_{2^m}$), then $P_i$ is the reset program $R$ (resp., $2^m$-standard program $S_{2^m}$); (ii) $\Ipb$ is the input alphabet $\Ip$ of $\Ak_P$ and $\Pp$; (iii) $\pp$ is a mapping from $(\vIp)\times\Ip$ into $\vIp$ where $\I_{P_i}$, $i\in [k]$, is $\Ir$ (resp., $\I_{2^m}$) if $P_i$ is the reset program $R$ (resp., $2^m$-standard program $S_{2^m}$), and $\psi_{\Pb,i}$ coincides with $\psi_{P,i}$ on the appropriate subset of $\vIp$ modulo the isomorphisms defined in the proof of \yref{thm:programmable}. Then, it follows from \yref{thm:programmable} that $\PP$ isomorphically represents $\Ak_P$ and, by transitivity of representation, it homomorphically represents $\Pp$, which proves our theorem.
\end{proof}

It is worth noting that the proof of \yref{thm:main} yields a product $\Pb$ whose characteristic automaton $\PP$ has the same input alphabet $\Ip$ as the characteristic automaton $\Pp$ of $P$. Therefore, we can characterize the answer sets of $P$ by $\PP$ as follows. Roughly, the product semantics of $\Pb$ {\em emerges} as an interaction of its (simple) factors $P_1,\ldots,P_k$ with respect to $P$. More precisely, by the remarks given above, there exists a quotient subautomaton $\PP'\s$ of $\PP$ which is isomorphic to $\Pp$ and which has the same input alphabet as $\Pp$. Let $h:\I'_{P_1}\dimes\I'_{P_k}\ra\Ip$ be the corresponding homomorphism of $\PP'$ onto $\Pp$ inducing $\sim$; we order $(\I'_{P_1}\dimes\I'_{P_k})\s$ by $(I_1,\ldots,I_k)\s\seq_h (I_1',\ldots,I_k')\s$ if $h(I_1,\ldots,I_k)\seq h(I_1',\ldots,I_k')$. Then, $((\I'_{P_1}\dimes\I'_{P_k})\s,\seq_h)$ is isomorphic (as a lattice) to $(\Ip,\seq)$, and we say that $I\in\Ip$ is a {\em $\PP'\s$-answer set} if $I=h(\lfp\PP'\s(\dot,I))$. Then, we have the following correspondence:
\begin{align}\label{equ:answer-set}\text{$I$ is a $\Pp$-answer set}\Iff\text{$I$ is a $\PP'\s$-answer set.}
\end{align} By \yref{thm:main}, we can assume that in the right hand side of (\ref{equ:answer-set}), only reset and $2^m$-standard programs occur. 

We illustrate these concepts by giving some examples.

\begin{example}\label{exa:A} Let $A=\{a\la\}$ be a program consisting of a single fact. We can interpret $A$ as a database {\em storing} some information represented by $a$. Observe that neither the reset program $R$ nor the $2$-program $S_2$ contains a {\em fact}. However, we verify that $$\Ab=R\,[\I_A,\psi_\Ab]=\{1\la\naf 1\}\,[\I_A,\psi_\Ab]$$ defined by $\psi_\Ab(J)=\0$, for all $J\in\I_A$, isomorphically represents $A$. Define $h:\Ir\ra\I_A$ by $h(\0)=\0$ and $h(\{1\})=\{a\}$. We check that $h$ is an isomorphism:
\begin{align*} 
h(\PA(I,J)) &= h(\Pr(I,\psi_\Ab(J)) \\
            &= h(\Pr(I,\0)) \\
            &= h(\{1\}) \\
            &= \{a\} \\
            &= \Psi_A(h(I),J)
\end{align*} holds for all $I\in\Ir$ and $J\in\I_A$. Therefore, the congruence relation $\sim$ induced by $h$ is the trivial diagonal relation and $\PA\s$ is isomorphic to $\PA$. Hence, $\Psi_A\in\Ib\Sb(\{\PA\})$. The calculation above proves that $\{a\}$ is the only $\Psi_A$-answer set or, equivalently, the only $\PA$-answer set. Intuitively, $\Ab$ ``emulates'' the storage of the fact $a$ by ignoring the input $J$ appropriately. Generally, the program $A_m=\{a_1\la;\ldots;a_m\la\}$, $m\geq 1$, is isomorphically represented by $\Ab_m=R\ldimes R\,[\I_{A_m},\psi_{\Ab_m}]$ (with $m$ factors) where $\psi_{\Ab_m,i}((I_1,\ldots,I_{i-1}),J)=\0$, for all $i\in [m]$, $I_1,\ldots,I_{i-1}\in\Ir$, and $J\in\I_{A_m}$. Here, an isomorphism is an arbitrary ``binary encoding'' $h$ of $\I_{A_m}$; e.g., $h(I_1,\ldots,I_m)=\{a_i\in\I_{A_m} : I_i=\{1\}, i\in [m]\}$. 
\end{example}

\begin{example}\label{exa:B} The program $B=\{a\la\naf b; b\la\naf a\}$ (cf. \yref{fig:B}) is isomorphically represented by the cascade program product $$\Bb=R\ltimes R\,[\I_B,\psi_\Bb]=\{1\la\naf 1\}\ltimes\{1\la\naf 1\}\,[\I_B,\psi_\Bb]$$ defined by
\begin{align*}
\psi_{\Bb,1}(\0) &= \psi_{\Bb,1}(\{a\})=\0, & \psi_{\Bb,2}(I,\0) &= \psi_{\Bb,2}(I,\{b\})=\0, \\
\psi_{\Bb,1}(\{b\}) &= \psi_{\Bb,1}(\{a,b\})=\{1\}, & \psi_{\Bb,2}(I,\{a\}) &= \psi_{\Bb,2}(I,\{a,b\})=\{1\},
\end{align*} for all $I\in\Ir$. Let $h:\Ir\times\Ir\ra\I_B$ be the ``binary encoding'' of $\I_B$ given by $h(\0,\0)=\0$, $h(\{1\},\0)=\{a\}$, $h(\0,\{1\})=\{b\}$, and $h(\{1\},\{1\})=\{a,b\}$. It is straightforward to verify that $h$ is an isomorphism of $\PB$ onto $\Psi_B$. For instance, we compute:
\begin{align*} 
h(\PB((\0,\0),\{a\})) &= h(\Pr(\0,\psi_{\Bb,1}(\{a\})),\Pr(\0,\psi_{\Bb,2}(\0,\{a\}))) \\
                      &= h(\Pr(\0,\0),\Pr(\0,\{1\})) \\
                      &= h(\{1\},\0) \\
                      &= \{a\} \\
                      &= \Psi_B(h(\0,\0),\{a\}).
\end{align*} Hence, $\Psi_B\in\Ib\Sb(\{\PB\})$. By the remarks given above, $I$ is a $\Psi_B$-answer set iff $I$ is a $\PB$-answer set and, clearly, $\{a\}$ and $\{b\}$ are the only ones.
\end{example}

\section{Isomorphic Representations}\label{sec:isomorphic}
In this section, we study the more restricted type of isomorphic representation and provide a complete class of programs with respect to it. Moreover, in \yref{thm:positive} we show that every positive tight program can be isomorphically represented by a cascade program product of reset programs.

For some $n\geq 1$, let $\sigma_1,\ldots,\sigma_{n^n}$ be an enumeration of the set of all mappings from $[n]$ into $[n]$. Define $T_n$ over $\G_{T_n}=[n^n]$ to be the program consisting of the rules, for all $j\in [n]$ and $k\in [n^n]$:
\begin{align*} \sigma_k(j)&\la j,\,\naf k.
\end{align*} As a consequence of \yref{thm:Tkn}, we obtain the following completeness result.

\begin{theorem}\label{thm:isomorphic} The class of programs consisting of all $T_n$, $n\geq 1$, is isomorphically complete with respect to the cascade program product.
\end{theorem}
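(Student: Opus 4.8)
The plan is to run the same scheme as the proof of \yref{thm:main}, but with \yref{thm:Krohn-Rhodes} replaced by \yref{thm:Tkn} and with \yref{thm:programmable} replaced by an analogous \emph{programming} lemma for $T_n$ and $\Tk_n$. The argument splits into three parts: (i) the program $T_n$ isomorphically programs the automaton $\Tk_n$; (ii) the class $\{\Tk_n:n\geq 1\}$ of automata is isomorphically complete with respect to the cascade automata product; and (iii) a transfer step converting a cascade automata product of $\Tk_n$'s that isomorphically represents $\Pp$ into a cascade program product of $T_n$'s whose characteristic automaton isomorphically represents $\Pp$.

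For part (i) I would exhibit, just as in \yref{thm:programmable}, a subautomaton of $\Psi_{T_n}$ isomorphic to $\Tk_n$. Take the singletons $\{\{j\}:j\in[n]\}$ as states and the co-singletons $\{[n^n]\sm\{k\}:k\in[n^n]\}$ as inputs. The decisive computation is
\begin{align*}
\Psi_{T_n}(\{j\},[n^n]\sm\{k\})=\{\sigma_l(j):l\in[n^n],\,l\notin[n^n]\sm\{k\}\}=\{\sigma_k(j)\},
\end{align*}
since $k$ is the unique index admitted by the co-singleton input; as $\sigma_k(j)\in[n]$, the result is again a singleton state, so the restriction is a genuine subautomaton. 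The bijections $h_1(\{j\})=j$ and $h_2([n^n]\sm\{k\})=\sigma_k$ then satisfy $h_1(\Psi_{T_n}(\{j\},[n^n]\sm\{k\}))=\sigma_k(j)=\d_n(h_1(\{j\}),h_2([n^n]\sm\{k\}))$, so $h=(h_1,h_2)$ is an isomorphism onto $\Tk_n$ and $\Tk_n\in\Ib\Sb(\{\Psi_{T_n}\})$.

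Part (ii) is the direct application of \yref{thm:Tkn}: for $\A=\{\Tk_n:n\geq 1\}$ and any $n$, choose $\Ak=\Tk_n\in\A$ and observe that $\Tk_n$ embeds into the single-factor cascade product $\Tk_n[\S_n,\psi]$ with $\psi$ the identity feedforward function, since that product is $\Tk_n$ itself. Hence $\A$ is isomorphically complete for automata, and for an arbitrary program $P$ the characteristic automaton $\Pp$ is isomorphically represented by some cascade automata product $\Ak_P=\Tk_{n_1}\ldimes\Tk_{n_k}[\Ip,\psi_P]$ over the input alphabet $\Ip$ of $\Pp$.

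Part (iii) is the step I expect to carry the real weight, and it follows the recipe of the proof of \yref{thm:main}: form $\Pb=T_{n_1}\ldimes T_{n_k}[\Ipb,\pp]$ by replacing each factor automaton $\Tk_{n_i}$ with the program $T_{n_i}$, setting $\Ipb=\Ip$, and defining each component feedforward function $\psi_{\Pb,i}$ to coincide with $\psi_{P,i}$ after translating states and inputs through the programming isomorphisms of part (i). The main obstacle is to verify that this translation respects the loop-free dependency pattern (each $\psi_{\Pb,i}$ depending only on the components $I_j$ with $j<i$) and that the product of the chosen factor subautomata and their restricted input alphabets is closed under $\PP$, so that $\PP$ isomorphically represents $\Ak_P$, i.e.\ $\Ak_P\in\Ib\Sb(\{\PP\})$. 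Granting this, $\Pp\in\Ib\Sb(\{\Ak_P\})$ from part (ii) together with $\Ib\Sb\Ib\Sb=\Ib\Sb$ yields $\Pp\in\Ib\Sb(\{\PP\})$, which is exactly isomorphic representability of $P$ by a cascade program product of $T_n$'s; since $P$ was arbitrary, the class $\{T_n:n\geq 1\}$ is isomorphically complete.
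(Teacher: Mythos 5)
Your proposal is correct, but it is organized differently from the paper's proof, and the difference is instructive. Your part (i) --- the subautomaton of $\Psi_{T_n}$ on the singleton states $\{j\}$ and co-singleton inputs $[n^n]\sm\{k\}$, together with the computation $\Psi_{T_n}(\{j\},[n^n]\sm\{k\})=\{\sigma_k(j)\}$ --- is in substance the paper's \emph{entire} proof: the paper reads exactly this data as an embedding of $\Tk_n$ into the single-factor cascade automata product $\Psi_{T_n}\,[\I_{T_n},\psi]$ with identity feedforward, and then applies \yref{thm:Tkn} with $\A=\{\Psi_{T_n}:n\geq 1\}$. Since, by \yref{def:cascade}, cascade automata products of the $\Psi_{T_n}$ are precisely the characteristic automata of cascade program products of the $T_n$, isomorphic completeness of the program class follows in that one step, with nothing further to check. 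You instead apply \yref{thm:Tkn} to the class $\{\Tk_n:n\geq 1\}$ (where its criterion holds trivially) and then re-import the programs by a factor-by-factor substitution modeled on the proof of \yref{thm:main}; this is what forces your part (iii), the transfer step you rightly flag as carrying the real weight and leave at ``granting this.'' That step does go through: the translated feedforward functions depend only on earlier components because the translation is componentwise, the product of singleton states is closed under $\PP$ by the very computation of your part (i), so $\Ak_P\in\Ib\Sb(\{\PP\})$, and $\Ib\Sb\Ib\Sb=\Ib\Sb$ finishes the argument. But note that this detour amounts to re-proving, in a special case, the reduction that \yref{thm:Tkn} supplies for free once you feed it the characteristic automata $\Psi_{T_n}$ rather than the abstract automata $\Tk_n$. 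In short: same key lemma and same key computation, but the paper's choice of the class $\A$ makes your parts (ii) and (iii) unnecessary.
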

\begin{proof} According to \yref{thm:Tkn} and \yref{def:representation}, we have to show that for every $n\geq 1$, the automaton $\Tk_n=([n],\S_n,\delta_n)$ can be embedded into a cascade automata product of $\Psi_{T_n}$ with a single factor. Define $\Psi_{\Tb_n}=\Psi_{T_n}\,[\I_{T_n},\psi_{\Tb_n}]$ by $\psi_{\Tb_n}(J)=J$, for all $J\in\I_{T_n}$. Define the embedding $h=(h_1,h_2)$, with $h_1:[n]\ra\I_{T_n}$ and $h_2:\S_n\ra\I_{T_n}$, by $h_1(j)=\{j\}$ and $h_2(\sigma_k)=\{1,\ldots,k-1,k+1,\ldots,n^n\}$, $k\in [n^n]$. Clearly, $h_1$ and $h_2$ are one-one, and the following computation proves that $h$ is indeed an embedding:
\begin{align*} 
\Psi_{\Tb_n}(h_1(j),h_2(\sigma_k)) &= \Psi_{T_n}(h_1(j),\psi_{\Tb_n}(h_2(\sigma_k))) \\
                        &= \Psi_{T_n}(h_1(j),h_2(\sigma_k)) \\
                        &= \Psi_{T_n}(\{j\},\{1,\ldots,k-1,k+1,\ldots,n^n\}) \\
                        &= \{\sigma_k(j)\} \\
                        &= h_1(\sigma_k(j)) \\
                        &= h_1(\delta_n(j,\sigma_k))
\end{align*} holds for all $j\in [n]$ and $k\in [n^n]$.
\end{proof}

We now turn to the restricted class of positive (i.e., negation-free) tight programs.

\begin{example}\label{exa:C} Consider the positive tight program $C=\{a\la; b\la a; c\la a,b\}$. The product $\Cb=R\ltimes R\ltimes R\,[\I_C,\psi_\Cb]$ given by 
\begin{align*} 
\psi_{\Cb,1}(J) = \0 && \psi_{\Cb,2}(I_1,J) = \{1\}\sm I_1 && \psi_{\Cb,3}((I_1,I_2),J) = \{1\}\sm (I_1\cap I_2)
\end{align*} for all $I_1,I_2\in\Ir$ and $J\in\I_C$, isomorphically represents $C$. Again, we define the isomorphism $h$ to be a ``binary encoding'' of $\I_C$ where, e.g., $(\{1\},\0,\0)$ is mapped to $\{a\}$, $(\{1\},\0,\{1\})$ is mapped to $\{a,c\}$ and so on. For instance, we can compute the least model $I=\{a,b,c\}$ of $C$ as follows:
\begin{align*} 
&h(\Psi_\Cb((\0,\0,\0),J)) = h(\Pr(\0,\0),\Pr(\0,\{1\}),\Pr(\0,\{1\})) = h(\{1\},\0,\0) = \{a\} \\
&h(\Psi_\Cb((\{1\},\0,\0),J)) = h(\Pr(\{1\},\0),\Pr(\0,\0),\Pr(\0,\{1\})) = h(\{1\},\{1\},\0) = \{a,b\} \\
&h(\Psi_\Cb((\{1\},\{1\},\0),J)) = h(\Pr(\{1\},\0),\Pr(\{1\},\0),\Pr(\0,\0)) = h(\{1\},\{1\},\{1\}) = I \\
&h(\Psi_\Cb((\{1\},\{1\},\{1\}),J)) = h(\Pr(\{1\},\0),\Pr(\{1\},\0),\Pr(\{1\},\0)) = h(\{1\},\{1\},\{1\})=I
\end{align*} where $J\in\I_C$ is arbitrary. The calculation shows that $I$ is a $\Psi_C$-answer set or, equivalently, a $\Psi_\Cb$-answer set and, clearly, it is the only one.

Now consider the slightly different program $C'=\{a\la; b\la a; c\la a; c\la b\}$. Then, $C'$ is isomorphically represented by the product $\Cb'=R\ltimes R\ltimes R\,[\I_{C'},\psi_{\Cb'}]$ given by
\begin{align*} 
\psi_{\Cb',1}(J) = \0 && \psi_{\Cb',2}(I_1,J) = \{1\}\sm I_1 && \psi_{\Cb',3}((I_1,I_2),J) = \{1\}\sm (I_1\cup I_2)
\end{align*} for all $I_1,I_2\in\Ir$ and $J\in\I_{C'}$. Let $h$ be defined as before. Iterating $\Psi_{\Cb'}$ bottom-up as above yields, for all $J\in\I_{C'}$:
\begin{align*} 
&h(\Psi_{\Cb'}((\0,\0,\0),J)) = h(\Pr(\0,\0),\Pr(\0,\{1\}),\Pr(\0,\{1\})) = h(\{1\},\0,\0) = \{a\} \\
&h(\Psi_{\Cb'}((\{1\},\0,\0),J)) = h(\Pr(\{1\},\0),\Pr(\0,\0),\Pr(\0,\0)) = h(\{1\},\{1\},\{1\}) = I \\
&h(\Psi_{\Cb'}((\{1\},\{1\},\{1\}),J)) = h(\Pr(\{1\},\0),\Pr(\{1\},\0),\Pr(\{1\},\0)) = h(\{1\},\{1\},\{1\}) = I
\end{align*} which shows that $I$ is also a $\Psi_{C'}$-answer set or, equivalently, a $\Psi_{\Cb'}$-answer set.
\end{example}

It is straightforward to generalize \yref{exa:C} to the general case.

\begin{theorem}\label{thm:positive} Every positive tight program $P$ can be isomorphically represented by a cascade program product of reset programs.
\end{theorem}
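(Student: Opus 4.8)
The plan is to generalize \yref{exa:C} directly: introduce one reset-program factor per atom, ordered by the tightness level mapping so that the loop-free dependency structure of the cascade product mirrors the stratification of $P$, and let each factor switch its atom on exactly when a rule fires. Since $P$ is tight, fix a level mapping $\ell:\G_P\Mapsto\mathbb N$ with $\ell(H(r))>\ell(b)$ for every $r\in P$ and $b\in B^+(r)$, and enumerate $\G_P=\{a_1,\ldots,a_m\}$ by non-decreasing level (ties broken arbitrarily); then $a_j\in B^+(r)$ with $H(r)=a_i$ forces $\ell(a_j)<\ell(a_i)$ and hence $j<i$. I would set $\Pb=R\ldimes R\,[\Ip,\pp]$ with $m$ factors and take $h_1:\Ir\dimes\Ir\Mapsto\Ip$, $h_1(I_1,\ldots,I_m)=\{a_i:I_i=\{1\}\}$, the binary encoding, which is a bijection because $2^m=\abs{\Ip}$.

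Next I would define the feedforward to exploit the one-step behaviour of the reset program, $\Pr(I,\0)=\{1\}$ and $\Pr(I,\{1\})=\0$: a factor is driven to $\{1\}$ (``on'') by feeding it $\0$ and to $\0$ (``off'') by feeding it $\{1\}$, regardless of its current value. Accordingly, set $\psi_{\Pb,i}((I_1,\ldots,I_{i-1}),J)=\0$ if there is a rule $r\in P$ with $H(r)=a_i$ and $I_j=\{1\}$ for all $a_j\in B^+(r)$, and $\psi_{\Pb,i}((I_1,\ldots,I_{i-1}),J)=\{1\}$ otherwise; in closed form, $\psi_{\Pb,i}=\{1\}\sm\bigcup_{r\in P,\,H(r)=a_i}\bigcap_{a_j\in B^+(r)}I_j$, which specialises to the formulas of \yref{exa:C} (with the empty intersection read as $\{1\}$ and the empty union as $\0$). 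Because $B^+(r)\seq\{a_1,\ldots,a_{i-1}\}$, each $\psi_{\Pb,i}$ depends only on components $j<i$, so the loop-free condition of \yref{def:cascade} holds; note also that $\psi_{\Pb,i}$ ignores the global input $J$.

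It then remains to check that $h_1$ (together with $h_2=\mathrm{id}$ on $\Ip$) is an isomorphism of $\PP$ onto $\Pp$. By construction the $i$-th component of $\PP((I_1,\ldots,I_m),J)$ is $\{1\}$ iff some rule with head $a_i$ has all its positive body atoms on in $(I_1,\ldots,I_{i-1})$, so $h_1(\PP((I_1,\ldots,I_m),J))=\{a_i:\exists r\in P,\,H(r)=a_i,\,B^+(r)\seq h_1(I_1,\ldots,I_m)\}$. Since $P$ is positive, $\Pp(X,Y)=\{H(r):r\in P,\,B^+(r)\seq X\}$ is independent of $Y$, and this set is exactly $\Pp(h_1(I_1,\ldots,I_m),J)$; as $h_1$ and $h_2$ are bijections this yields $\Pp\in\Ib\Sb(\{\PP\})$. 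I expect the main obstacle to be precisely the first step — extracting from tightness an atom ordering compatible with the loop-free constraint that component $i$ may observe only components $j<i$. Once that ordering is in place, positivity makes the input argument inert and the one-step correspondence (hence the answer-set correspondence (\ref{equ:answer-set})) is immediate.
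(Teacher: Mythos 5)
Your proposal is correct and is exactly the generalization of \yref{exa:C} that the paper gestures at (the paper itself offers no further proof beyond the remark that generalizing the example is straightforward): one reset factor per atom, atoms ordered by the tightness level mapping so that the loop-free condition of \yref{def:cascade} holds, and feedforward $\psi_{\Pb,i}=\{1\}\sm\bigcup_{r\in P,\,H(r)=a_i}\bigcap_{a_j\in B^+(r)}I_j$, which specializes to the formulas for $\Cb$ and $\Cb'$. Your verification that $h_1$ intertwines $\PP$ with $\Pp$ (using positivity to make the second argument of $\Pp$ inert) is sound, so nothing is missing.
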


\section{Discussion and Conclusion}\label{sec:Discussion_and_Conclusion}

In this paper, we applied the Krohn-Rhodes Theory \cite{Krohn65}, presented here following \cite{Gecseg86}, to Answer Set Programming (ASP) \cite{Gelfond91}. Particularly, we defined a cascade product for ASP and, by relating programs and automata, showed that every program can be represented (or ``emulated'') by a product of very simple programs. We thus obtained nice theoretical results regarding the structure of ASP programs, which can be straightforwardly generalized to wider classes of nonmonotonic reasoning formalisms. More precisely, as our concepts and results hinge on the operator $\Pp$, they can be directly reformulated in the algebraic framework of Approximation Fixpoint Theory (AFT) \cite{Denecker00}, which captures, e.g., ordinary ASP, default and autoepistemic logic \cite{Denecker03}, and ASP with external sources \cite{Antic13}.

In a broader sense, this paper is a first step towards an algebraic theory of products and networks of nonmonotonic reasoning systems, including ASP and other formalisms. More precisely, we considered here only the very restricted (though powerful) kind of {\em cascade} product; it corresponds to the $\alpha_0$-product in \cite{Gecseg86}, and to the wreath product in finite semigroup theory \cite{Krohn65}. In the automata literature, however, many other important products have been studied (for an overview see \citeA{Domosi05}). We believe that recasting these kinds of products to ASP will lead to interesting results. Particularly, the notion of an asynchronous network \cite<cf.>[Chapter 7]{Domosi05} seems very appealing from an ASP point of view, as current modular ASP formalisms (e.g., \citeA{DaoTran09} cannot cope with asynchronous module structures according to our knowledge. Moreover, as different formalisms can be unified in the AFT-setting, heterogeneous networks in the vein of multi-context systems \cite<cf.>{Brewka11a} arise naturally. Finally, our concept of a product semantics {\em emerging} from the interaction of its simple factors (cf. \yref{sec:homomorphic}) seems interesting from a general AI perspective and we believe that it deserves a more intensive (and probably more intuitive) study in future work.

Although the Krohn-Rhodes decomposition theorem \cite{Krohn65} is now almost 50 years old, implementations and feasible applications of the Krohn-Rhodes Theory emerged only very recently \cite<cf.>{Egri-Nagy05}; reference 
our paper provides further evidence that it is a valuable tool for knowledge representation and reasoning in AI \cite<e.g.>{Egri-Nagy06}, and implementations in the ASP-setting remain as future work.

\bibliographystyle{theapa}
\bibliography{/Users/christianantic/Bibdesk/Bibliography,/Users/christianantic/Bibdesk/Preprints,/Users/christianantic/Bibdesk/Publications}
\end{document}